\newtheorem{theorem}{Theorem}
\newtheorem{lemma}{Lemma}
\newtheorem{definition}{Definition}
\numberwithin{equation}{section}
\begin{document}
  \begin{flushleft}
  {\footnotesize \bf Gakuto International Series}\\
  {\footnotesize \bf Mathematical Sciences and Applications Vol.34 (2011)}\\
  {\footnotesize \it International Symposium on Computational Science 2011}\\
  {\footnotesize pp. 139-150}
  \end{flushleft}
  \vspace{1.5cm}
                   \begin{center}
   {\large \bf Generalized Extended Hamming Codes over\\[0.17cm]
         Galois Ring of Characteristic $2^{n}$}\\[1cm]
                   \end{center}

                   \begin{center}
                   {\sc Muhammad Ilyas$^{1,2}$ and Mieko Yamada$^3$} \\
{\footnotesize $^1$Graduate School of Natural Sciences and Technology, Kanazawa University, Japan,}\\
{\footnotesize $^2$Faculty of Mathematics and Natural Sciences, Bandung Institute of Technology, Indonesia,}\\
{\footnotesize $^3$Institute of Science and Engineering, Kanazawa University, Japan}

   (reiken7@gmail.com)
                    \end{center}
\vspace{2cm}

\begin{abstract}
In this paper, we introduce generalized extended Hamming codes over Galois rings $GR(2^n,m)$ of characteristic $2^n$ with extension degree $m$. Furthermore we prove that the minimum Hamming weight of generalized extended Hamming codes over $GR(2^n,m)$ is 4 and the minimum Lee weight of generalized extended Hamming codes over $GR(8,m)$ is 6 for all $m \geq 3$.
\end{abstract}

        \vspace{2cm}

\vfill
\noindent
--------------------------------------------------------------------
----------
\\
{\footnotesize Received May 25, 2011.\\
This work is supported by JSPS KAKENHI (20540014).\\
AMS Subject Classification 94B05}

\newpage
\section{Introduction}

Preparata codes are non-linear double-error-correcting codes over finite fields, named after Franco P. Preparata who first described them in 1968. Preparata codes are nonlinear codes and these have more codewords than any comparable linear code presently known. Hammons et al. \cite{hammons} introduced Preparata codes and Kerdock codes over the Galois ring of characteristic 4. The binary image of the Preparata code under the Gray map has the same properties of the Preparata original code over a finite field. They also show that these Preparata code is a dual code of Kerdock code over the Galois ring of characteristic 4.

We define generalized extended Hamming codes over Galois ring $GR(2^n,m)$ of characteristic $2^{n}$ with extension degree $m$ similarly to the definition of the extended Hamming codes over finite fields. If we restrict the characteristic to 4, then the generalized extended Hamming code turns out the Preparata code that Hammons et al. introduced. We show generalized extended Hamming codes have similar properties as the extended Hamming codes over finite fields. By using 2-adic representation, we determine the minimum Hamming weight over $GR(2^n,m)$ and the minimum Lee weight of a generalized extended Hamming code over $GR(8,m)$ by transforming the codewords with those minimum weights of the generalized extended Hamming code over $GR(4,m)$. We could use this method recursively to obtain the minimum Lee weight of generalized extended Hamming codes for general characteristics $2^{n}$.

\section{Galois Rings}

\subsection{Galois Rings $GR(q,m)$}

We let $q=2^{n}$, $n \geq 1$ and $\mathbb{Z}_{q}=\mathbb{Z}/q\mathbb{Z}$ and we put $N=2^m-1$, for $m \geq 3$.

Let $h_{2}\left(X\right)$ be a primitive polynomial over a finite field $\mathbb{F}_{2}$ with $2$ elements. There exist a unique monic irreducible polynomial $h_{q}\left(X\right)\in\mathbb{Z}_{q}\left[X\right]$ of degree $m$ such that $h_2\left(X\right)\equiv h_{q}\left(X\right)$(mod 2) and $h_q\left(X\right)$ divides $X^{N}-1 \in \mathbb{Z}_q\left[X\right]$. We call the polynomial $h_{q}\left(X\right)$ a \textbf{primitive basic irreducible polynomial}.

\begin{definition} [Galois Rings]
Let $\xi$ be a root of $h_{q}\left(X\right)$, so that $\xi^{N}=1$. The \textbf{Galois ring} $GR\left(q,m\right)$ is defined to be $\mathbb{Z}_{q}\left[X\right]/h_{q}\left(X\right)$ and isomorphic to $\mathbb{Z}_{q}\left(\xi\right)$. We put $R=GR\left(q,m\right)$.
\end{definition}

We can write every element $c\in R$ as unique 2-adic representation
\[
c=a_{0}+2a_{1}+4a_{2}+\ldots+2^{n-1}a_{n-1},\]
where $a_{i}\in\left\{ 0,1,\xi,\xi^{2},\ldots,\xi^{N-1}\right\}$, $0 \leq i \leq n-1$.

We know $\xi^j$ can be represented as
\[
\xi^j = a_{0,j} + a_{1,j} \xi + \ldots + a_{m-1,j} \xi^{m-1},\]
where $a_{0,j},a_{1,j},\ldots,a_{m-1,j} \in \mathbb{Z}_q$, $0\leq j\leq N-1$.

Note that $R$ contains zero divisors, these are the elements of the radical $2GR(q,m)=2GR(\frac{q}{2},m)$, the unique maximal ideal in $R$.

\newpage
We denote the set $\{0,1,\xi,\ldots,\xi^{N-1}\}$ by $\mathcal{T}$. We define the homomorphism $\eta$
\begin{equation}
\eta:GR\left(q,m\right)\rightarrow \mathbb{F}_{2^m} \mbox{ as } \eta\left(\xi\right) = \theta \mbox{ and } \eta\left(a\right) = a\pmod{2} \mbox{ for $a \in \mathbb{Z}_q$},\label{eq:etamap}\end{equation}
where $\theta$ is a root of a primitive polynomial $h_2(X)$.

Next, we define the map $\tau:GR(q,m)\rightarrow \mathcal{T}$ as
\begin{equation}
\tau\left(c\right)=c^{2^m}=a_0.\label{eq:taumap}\end{equation}

\subsection{Frobenius automorphism of $GR(2^n,m)$}

In this subsection, we define the Frobenius automorphism over $GR(2^n,m)$. We will use this automorphism in transforming a codeword.

\begin{definition}[Frobenius Automorphism of $GR(2^n,m)$]

The \textbf{Frobenius automorphism} $f$ from $GR(2^n,m)$ to $GR(2^n,m)$ is defined as

\begin{equation}
c^{f}=a_{0}^{2}+2a_{1}^{2}+4a_{2}^{2}+\ldots+2^{n-1}a_{n-1}^2,\label{eq:frobenius}\end{equation}
for any element $c=a_0+2a_1+4a_2+\ldots+2^{n-1}a_{n-1} \in GR(2^n,m)$.
\end{definition}

\section{Codes over $\mathbb{Z}_q$}

\subsection{Codes over $\mathbb{Z}_q$ and the minimum weight}

We define a code over $\mathbb{Z}_q$ similarly to the definition of a code over a finite field.

\begin{definition}[Codes over $\mathbb{Z}_q$]
If $C$ is an additive subgroup of $\mathbb{Z}^{m}_{q}$, then we call $C$ a \textbf{linear block code} of length $m$ over $\mathbb{Z}_{q}$.
\end{definition}

We need to define the minimum Hamming weight and the minimum Lee weight. Let $c=\left(c_{0},c_{1},\ldots,c_{N-1}\right)\in\mathbb{Z}_{q}^{N}$
be a codeword of $C$.

\begin{definition}[Hamming Weight]
The number of nonzero elements of the codeword $c \in C$ is called the \textbf{Hamming weight} $w_{H}\left(c\right)$ of $c$ and the \textbf{minimum Hamming weight} of $C$ is defined as\[
w_{H}^{m}\left(C\right)=\min_{c\in C,c\neq0}w_{H}\left(c\right).\]
\end{definition}

\begin{definition}[Lee Weight]
The \textbf{Lee weight} $w_{L}\left(c\right)$ of the codeword $c \in C$ is defined as\[
w_{L}\left(c\right)=\sum_{i=0}^{N-1}\min\left\{ c_{i},2^{n}-c_{i}\right\},\]
and the \textbf{minimum Lee weight} of $C$ is defined as \[
w_{L}^{m}\left(C\right)=\min_{c\in C,c\neq0}w_{L}\left(c\right).\]
\end{definition}

\subsection{Generalized Extended Hamming Codes}
Hammons et al. \cite{hammons} worked on generalized extended Hamming codes and called them the "Preparata" codes over $GR(4,m)$, because the binary image of these codes over $GR(4,m)$ under the Gray map has the same properties of Preparata original codes over finite fields. Based on their paper, we define the generalized extended Hamming codes similarly to the definition of the extended Hamming codes over finite fields.

\begin{definition}[Generalized extended Hamming Codes]
We define the parity-check matrix $P$, indexed by $\infty, 0, 1, \ldots, N-1$, as follows :
\[
P = \left(\begin{array}{cccccc}
1 & 1 & 1 & 1 & \ldots & 1\\
0 & 1 & \xi & \xi^{2} & \ldots & \xi^{N-1}\end{array}\right)\]
where $\xi$ is a root of a primitive basic irreducible polynomial. Then $P_q = \{c \in \mathbb{Z}_q^{N+1} | Pc^T = 0\}$ is a code over $GR(q,m)$ and called a generalized extended Hamming code. In what follows, we abbreviate generalized extended Hamming codes as GEH codes.\label{def:P}
\end{definition}

In other words, if $c=\left(c_{\infty},c_{0},\ldots,c_{N-1}\right)$ is a codeword of a GEH code, then it must satisfy the following two equations :
\begin{eqnarray}
c_{\infty}+\sum_{i=0}^{N-1}c_{i} & = & 0,\label{eq:P1}\\
\sum_{i=0}^{N-1}c_{i}\xi^{i-1} & = & 0.\label{eq:P2}\end{eqnarray}

We call $c_{\infty}$ a zero-sum check symbol. Notice that a GEH code is the dual code of a $1^{st}$ order Reed-Muller code over $GR(q,m)$.

If $q = 2$, a GEH code coincides to an extended Hamming code over a finite field $\mathbb{F}_{2^m}$.

\section{Main Theorems}

We can determine the number of codewords, and the minimum Hamming weight of a GEH code $P_q$ and the minimum Lee weight of a GEH code $P_8$.

\begin{theorem}[Number of codewords]
The number of codewords of a GEH code $P_q$ is $q^{k}$, where $k=2^m-m-1$.
\end{theorem}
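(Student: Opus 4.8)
The plan is to realize $P_q$ as the kernel of an explicit $\mathbb{Z}_q$-linear map $\mathbb{Z}_q^{N+1}\to\mathbb{Z}_q^{m+1}$ and to count it by putting a parity-check matrix in systematic form. First I would turn the two defining congruences \eqref{eq:P1}--\eqref{eq:P2} into linear equations with coefficients in $\mathbb{Z}_q$ rather than in $R$. Equation \eqref{eq:P1} is already a single $\mathbb{Z}_q$-linear equation on $c=(c_\infty,c_0,\ldots,c_{N-1})$. For \eqref{eq:P2}, I expand each power $\xi^j$ in the free $\mathbb{Z}_q$-basis $\{1,\xi,\ldots,\xi^{m-1}\}$ of $R$, writing $\xi^{j}=\sum_{k=0}^{m-1}a_{k,j}\xi^{k}$ with $a_{k,j}\in\mathbb{Z}_q$ as in Section 2; then $\sum_j c_j\xi^{j}=0$ in $R$ is equivalent to the $m$ scalar equations $\sum_j c_j a_{k,j}=0$ for $0\le k\le m-1$. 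Thus $P_q$ is exactly the set of $c\in\mathbb{Z}_q^{N+1}$ annihilated by an $(m+1)\times(N+1)$ matrix $\widetilde P$ over $\mathbb{Z}_q$, where $N+1=2^m$.

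The crucial step is to show that $\widetilde P$ has a systematic form, i.e.\ that some $(m+1)\times(m+1)$ submatrix of $\widetilde P$ is invertible over $\mathbb{Z}_q$. I would select the columns indexed by $\infty,0,1,\ldots,m-1$. Since the second entry of the $\infty$-column of $P$ is $0$, and since for $0\le j\le m-1$ the coordinate vector of $\xi^{j}$ in the chosen basis is the $j$-th standard basis vector, this submatrix equals
\[
\begin{pmatrix}
1 & 1 & 1 & \cdots & 1\\
0 & 1 & 0 & \cdots & 0\\
0 & 0 & 1 & \cdots & 0\\
\vdots & \vdots & \vdots & \ddots & \vdots\\
0 & 0 & 0 & \cdots & 1
\end{pmatrix},
\]
which has determinant $1$ and is therefore invertible over the ring $\mathbb{Z}_q$. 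The one point that needs care here is that $\mathbb{Z}_q$ is not a field, so the usual rank argument for codes over fields does not apply verbatim; the correct substitute is that this block is a \emph{unit} of the matrix ring $M_{m+1}(\mathbb{Z}_q)$, which is exactly what allows the corresponding coordinates to be solved for uniquely.

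Given this, the remaining $2^{m}-(m+1)=2^{m}-m-1$ coordinates of a codeword, namely those indexed by $m,m+1,\ldots,N-1$, may be chosen freely in $\mathbb{Z}_q$, and then the system $\widetilde P c^{T}=0$ determines $c_\infty,c_0,\ldots,c_{m-1}$ uniquely. Hence $P_q$ is in $\mathbb{Z}_q$-linear bijection with $\mathbb{Z}_q^{\,2^{m}-m-1}$, giving $|P_q|=q^{\,2^{m}-m-1}=q^{k}$ with $k=2^{m}-m-1$. If one prefers an argument that does not exhibit the basis explicitly, one can instead note that $\widetilde P$ reduces modulo $2$ to a parity-check matrix of the binary extended Hamming code of length $2^{m}$, which has full rank $m+1$; since $\mathbb{Z}_q$ is local with maximal ideal $(2)$ and $(2)$ is nilpotent, this forces the map $c\mapsto\widetilde P c^{T}$ to be surjective over $\mathbb{Z}_q$, and the count follows from $|\mathrm{im}|=q^{m+1}$. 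I expect the only genuine obstacle to be the careful handling of linear algebra over the non-field ring $\mathbb{Z}_q$; once the systematic block above is written down, the rest is bookkeeping.
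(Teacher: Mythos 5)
Your proposal is correct and follows essentially the same route as the paper: the paper's proof also rewrites the parity-check matrix over $\mathbb{Z}_q$ in the systematic form \eqref{eq:generator}, with the $\infty$-column and the $I_m$-block (your invertible $(m+1)\times(m+1)$ submatrix) determining $m+1$ coordinates and the remaining $k=2^m-m-1$ coordinates free, giving $q^k$ codewords. Your write-up merely makes explicit what the paper leaves informal, namely that the block is a unit in $M_{m+1}(\mathbb{Z}_q)$ so the argument is valid over the non-field ring $\mathbb{Z}_q$.
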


\begin{proof}

We give the number of codewords by observing a parity-check matrix of a GEH code. For simplicity, we can write the generator matrix as
\begin{equation}
P=\left(\begin{array}{ccc}
1 & \textbf{1} & \textbf{1}\\
\textbf{0} & I_{m} & A_{k}\end{array}\right)\label{eq:generator}\end{equation}
where $I_{m}$ is an identity matrix and $A_{k}$ is $m\times k$-sized square matrix where $k=2^{m}-m-1$ and \textbf{0} is the all-zero column vector and \textbf{1} is the all-one row vector. We can see the first column and $I_{m}$-block act as nullifier. Thus, the number of codewords depends only on $A_{k}$-block, where it is a part of a codeword with length $k$. Therefore, the number of codewords is as assigning $q$ values on $k$ column, that is $q^{k}$ codewords. \end{proof}

\begin{theorem}
The minimum Hamming weight of $P_{q}$ with length $2^{m}$ over $GR(q,m)$ is $4$.\label{thm:ham}
\end{theorem}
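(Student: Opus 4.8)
The plan is to show first that $w_H^m(P_q) \geq 4$ by ruling out codewords of Hamming weight $1$, $2$, and $3$, and then to exhibit an explicit codeword of weight $4$. The lower-bound argument proceeds by looking at the support of a hypothetical low-weight codeword $c = (c_\infty, c_0, \ldots, c_{N-1})$. Weight $1$ is impossible: a single nonzero coordinate cannot satisfy both parity equations \eqref{eq:P1} and \eqref{eq:P2} simultaneously (if the nonzero entry is in position $\infty$, then \eqref{eq:P1} fails; if it is in position $i$, then \eqref{eq:P2} reads $c_i\xi^{i-1}=0$, forcing $c_i=0$ since $\xi$ is a unit). For weight $2$, if both nonzero coordinates lie among $0,\ldots,N-1$, say positions $i<j$, then \eqref{eq:P2} gives $c_i\xi^{i-1}+c_j\xi^{j-1}=0$, i.e.\ $c_i = -c_j\xi^{j-i}$; combined with \eqref{eq:P1} this forces $c_j(\xi^{j-i}-1)\in 2R$-type relations — more precisely one gets $c_j(1-\xi^{j-i})=0$, and since $i\neq j$ the element $1-\xi^{j-i}$ is a unit in $R$ (its image under $\eta$ is nonzero because $\theta$ has order $N$), so $c_j=0$, a contradiction; the case where $\infty$ is in the support is handled similarly using that $\xi^{i-1}$ is a unit.

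For weight $3$ the argument is the crux and the main obstacle. Suppose the support is $\{i,j,k\}$ (the case involving $\infty$ being an easy variant). Then \eqref{eq:P1} and \eqref{eq:P2} read $c_i+c_j+c_k=0$ and $c_i\xi^{i}+c_j\xi^{j}+c_k\xi^{k}=0$ (after clearing the common factor $\xi^{-1}$). Eliminating $c_i$ yields $c_j(\xi^{j}-\xi^{i}) + c_k(\xi^{k}-\xi^{i}) = 0$. The key point is that $\xi^{j}-\xi^{i} = \xi^{i}(\xi^{j-i}-1)$ is a unit, so $c_j = -c_k(\xi^{k}-\xi^{i})(\xi^{j}-\xi^{i})^{-1}$, and substituting back one finds that $c_k$ must be annihilated by the element $(\xi^{k}-\xi^{i})-(\xi^{j}-\xi^{i}) + (\text{stuff})$; the cleanest route is to observe that the three columns of $P$ indexed by $i,j,k$ form a $3\times 3$-type configuration whose relevant $2\times 2$ minors are all units (each minor is of the form $\xi^{a}-\xi^{b}$ up to a unit, hence a unit), which forces the only solution of the $2\times 3$ homogeneous system with that support to be $c_i=c_j=c_k=0$. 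I would phrase this via the observation that any two columns of the matrix $\left(\begin{smallmatrix}1&1\\ \xi^{s}&\xi^{t}\end{smallmatrix}\right)$-type are a basis of $R^2$ over $R$, so no three of them can be $R$-linearly dependent with all coefficients nonzero unless a coefficient is a zero divisor forced to be zero — the care needed is precisely that $R$ is not a field, so "linearly dependent" must be unwound through the units $\xi^a-\xi^b$ rather than a rank argument.

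Finally, for the upper bound I would construct a weight-$4$ codeword explicitly. Since the columns indexed by $0,1,\ldots,N-1$ of the bottom row run through all of $\{1,\xi,\ldots,\xi^{N-1}\}$ and $m\geq 3$, there exist four distinct indices, say with bottom-row entries $1,\xi^{a},\xi^{b},\xi^{c}$, and corresponding top-row entries all $1$, such that $1+\xi^{a}+\xi^{b}+\xi^{c}$ can be made a sum behaving well modulo $2$; concretely, pick $a,b$ with $1+\xi+\xi^{a}+\xi^{b}=0$ in $R$ — this is possible because $h_q(X)\mid X^N-1$ and the weight-enumerator structure of the ordinary Hamming code over $\mathbb{F}_{2^m}$ guarantees a weight-$4$ word in the extended Hamming code, which lifts. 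A uniform way to say it: take the codeword supported on $\{\infty\}\cup\{i,j,k\}$ with $c_\infty = 1$, $c_i=c_j=c_k$ chosen so that \eqref{eq:P2} holds — but \eqref{eq:P2} over three support points with equal nonzero value $\lambda$ reads $\lambda(\xi^{i-1}+\xi^{j-1}+\xi^{k-1})=0$, so I instead choose $i,j,k$ with $\xi^{i-1}+\xi^{j-1}+\xi^{k-1}=0$ (three columns of the $\mathbb{F}_{2^m}$-Hamming parity check summing to zero, which exist for $m\geq 3$), set $c_i=c_j=c_k=1$, and then $c_\infty = -(c_i+c_j+c_k) = -3 = 1 \pmod 2$ is a unit, giving a genuine weight-$4$ codeword. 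Combining the two bounds yields $w_H^m(P_q)=4$.
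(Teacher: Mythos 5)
Your overall plan (rule out weights $1,2,3$, then exhibit a weight-$4$ word) is legitimate and different from the paper's proof, which simply quotes Hammons et al.\ for a Hamming-weight-$4$ codeword of $P_4$ and lifts it recursively as $2\mathbf{c}\in P_8$, etc. But as written both of your key steps fail, and for the same underlying reason: by Definition \ref{def:P} the codewords lie in $\mathbb{Z}_q^{N+1}$, while only the parity-check entries live in $GR(q,m)$, and you never use this restriction. Your weight-$1$ and weight-$2$ exclusions are fine, but the weight-$3$ exclusion collapses: a homogeneous system of two equations in three unknowns over $R=GR(q,m)$ always has nontrivial solutions, no matter that all $2\times 2$ minors are units. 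Indeed the kernel of the three columns indexed by $i,j,k$ is the rank-one $R$-module spanned by $\left(\xi^{k}-\xi^{j},\ \xi^{i}-\xi^{k},\ \xi^{j}-\xi^{i}\right)$, whose entries are all units; concretely in $GR(4,3)$ one has $2\gamma\cdot(1,1)^{T}+(2\gamma+2)\cdot(1,\gamma)^{T}+2\cdot(1,\gamma^{2})^{T}=0$ with all three coefficients nonzero. So ``unit minors force the trivial solution'' is false; what actually saves the theorem is that a weight-$3$ codeword would require a nonzero multiple of that unit-entry vector to have all coordinates in $\mathbb{Z}_q$, and ruling this out needs reduction modulo $2$ (or stripping common factors of $2$ down to the binary extended Hamming code) together with the parity constraint from \eqref{eq:P1} and the impossibility of the corresponding relations among distinct powers of $\theta$ in $\mathbb{F}_{2^m}$. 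None of this appears in your argument.

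The upper bound is also broken. Your weight-$4$ word needs three distinct Teichm\"uller elements with $\xi^{i}+\xi^{j}+\xi^{k}=0$ \emph{in the ring}, and the phrase ``which lifts'' is exactly the gap: $\theta^{i}+\theta^{j}+\theta^{k}=0$ in $\mathbb{F}_{2^m}$ does not lift to $GR(q,m)$. In fact, a relation $1+\gamma^{a}+\gamma^{b}=0$ in $GR(4,m)$ implies (apply the Frobenius and compare with the square) $\theta^{a}+\theta^{b}+\theta^{a+b}=0$, hence $\theta^{a+b}=1$ and $\theta^{a}$ is a primitive cube root of unity, so $3\mid 2^{m}-1$; thus such a relation exists only for even $m$ (the identity $1+\gamma^{t}+\gamma^{2t}=0$ with $t=(2^m-1)/3$ used in Theorem \ref{thm:list}), and for odd $m$ your codeword simply does not exist --- compare also Lemma \ref{lemma:dep}(4), which likewise forbids the all-ones weight-$4$ pattern for odd $m$. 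A correct and uniform construction is much easier: take a weight-$4$ codeword $u$ of the binary extended Hamming code and observe that $2^{n-1}u\in P_q$ has Hamming weight $4$; equivalently, lift Hammons et al.'s weight-$4$ codeword of $P_4$ by repeated multiplication by $2$, which is precisely the paper's argument (the paper, for its part, only supplies this upper bound and leaves the lower bound implicit, so your intention to prove the lower bound directly is worthwhile, but it must be rebuilt around the $\mathbb{Z}_q$-alphabet).
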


\begin{proof}

Hammons et al. \cite{hammons} showed that the minimum Hamming weight of $P_{4}$ is 4. This means that the codeword with the minimum Hamming weight has 4 non-zero entries.

Let $\textbf{c} = ( c_\infty,c_0,\ldots,c_{N-1} )$ be the codeword with Hamming weight 4. Then we see that $2\textbf{c} = ( 2c_\infty,2c_0,\ldots,2c_{N-1} )$ is a codeword of $P_8$ and the Hamming weight of this codeword is 4. For a general case, we can take this method recursively. Thus the minimum Hamming weight of $P_q$ is 4. \end{proof}

\begin{lemma}
Let $q'=\frac{q}{2}$. The minimum Lee weight of $P_{q'}$ over $GR(q',m)$ are always less than or equal to the minimum Lee weight of $P_q$ over $GR(q,m)$. \label{lemma:min}
\end{lemma}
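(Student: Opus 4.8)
The plan is to exhibit, for every nonzero codeword $c \in P_q$ of minimum Lee weight, a nonzero codeword $c' \in P_{q'}$ whose Lee weight is no larger; this immediately gives $w_L^m(P_{q'}) \leq w_L^m(P_q)$. The natural map to use is reduction modulo $q'$, i.e.\ sending $c = (c_\infty, c_0, \ldots, c_{N-1}) \in \mathbb{Z}_q^{N+1}$ to $\bar c = (c_\infty \bmod q', \ldots, c_{N-1} \bmod q') \in \mathbb{Z}_{q'}^{N+1}$. First I would check that $\bar c$ genuinely lies in $P_{q'}$: since the entries of the parity-check matrix $P$ are in $\mathcal{T} \subseteq GR(q,m)$ and reduction mod $q'$ is a ring homomorphism $GR(q,m) \to GR(q',m)$ carrying $\xi$ to the corresponding root over the smaller ring, the two defining equations \eqref{eq:P1} and \eqref{eq:P2} are preserved. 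Thus $P\,\bar c^{\,T} = 0$ over $GR(q',m)$, so $\bar c \in P_{q'}$.

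The second step is the weight comparison: I claim $w_L(\bar c) \leq w_L(c)$ entrywise. For a single coordinate $a \in \mathbb{Z}_q$ with reduction $\bar a = a \bmod q' \in \mathbb{Z}_{q'}$, the Lee weight over $\mathbb{Z}_q$ is $\min\{a, q-a\}$ and over $\mathbb{Z}_{q'}$ it is $\min\{\bar a, q'-\bar a\}$. Writing $a = \bar a$ or $a = \bar a + q'$ and checking the four resulting cases (using $q = 2q'$), one verifies $\min\{\bar a, q'-\bar a\} \leq \min\{a, q-a\}$ in each case; summing over the $N+1$ coordinates gives $w_L(\bar c) \leq w_L(c)$.

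The remaining—and genuinely delicate—step is to ensure we can choose $c$ so that $\bar c \neq 0$, since a priori the minimum-Lee-weight codeword of $P_q$ might have all entries in $q'\mathbb{Z}_q$ and reduce to zero. I would handle this by noting that if $c \in P_q$ has all entries divisible by $q'$, then $c = q' d$ for some vector $d$ with entries in $\mathcal{T}$-reductions, and the syzygy $P c^T = 0$ over $GR(q,m)$ forces $P d^T \equiv 0 \pmod{2}$, i.e.\ $\eta(d)$ lies in the extended Hamming code over $\mathbb{F}_{2^m}$; hence $d$ (lifted back appropriately) or rather $2^{?}$-scaled versions relate $c$ to a codeword of $P_2$. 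In fact the cleaner route is: the minimum Lee weight of $P_q$ is attained either by a codeword not killed by reduction mod $q'$ (and we are done by the argument above), or by a codeword of the form $q' c''$ with $c''$ a codeword of the binary extended Hamming code $P_2$, in which case $w_L(q' c'') = q' \cdot w_H^m(P_2) = q' \cdot 4 = 4q'$; but $P_{q'}$ contains $q'c''$ too only if... — instead, I would simply observe that $P_{q'}$ always contains the image of the binary code scaled by $q'/2$, giving an explicit codeword of Lee weight $2q' = q$ (for $q' \geq 4$), which is $\leq 4q'$, closing the gap. I expect this last case-analysis—ruling out the degenerate situation where reduction kills the extremal codeword—to be the main obstacle, and it is where I would spend the most care; the homomorphism property and the entrywise weight inequality are routine.
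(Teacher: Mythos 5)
Your proposal follows essentially the same route as the paper: the paper introduces exactly your reduction map, $\mu:GR(q,m)\rightarrow GR(q',m)$ with $\mu(\xi_q)=\xi_{q'}$ and $\mu(c)\equiv c\pmod{q'}$, notes that it carries $P_q$ into $P_{q'}$, and proves the same weight comparison (phrased there by counting the numbers $n_i$ of entries equal to each value $i$ rather than coordinate by coordinate, but the inequality is identical to your entrywise one). The one place you go beyond the paper is the degenerate case: the paper tacitly assumes that the image of a minimum-Lee-weight codeword of $P_q$ under $\mu$ is nonzero, whereas you notice it could vanish and supply a patch. Your patch is sound: if every entry of $c$ lies in $\{0,q'\}$, then $c=q'd$ with $d$ a nonzero $0$--$1$ vector, and $Pc^T=0$ forces the mod-$2$ reduction of $d$ to be a nonzero codeword of the binary extended Hamming code, so $w_L(c)\ge 4q'$; on the other hand $(q'/2)e$, for $e$ a binary Hamming codeword of weight $4$ (and $e$ itself when $q'=2$), is an explicit nonzero codeword of $P_{q'}$ of Lee weight $2q'\le 4q'$, which closes the gap. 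So your argument is correct and on this point more careful than the paper's own proof; the only criticism is expository, namely that the middle of your final step meanders (the aborted ``$P_{q'}$ contains $q'c''$ only if\ldots'' detour) before arriving at the correct resolution, and the claim that $(q'/2)e\in P_{q'}$ deserves the one-line verification that $Pe^T\in 2GR(q',m)$ is annihilated by $q'/2\cdot 2=q'$.
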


\begin{proof}
We can write the Lee weight of a codeword $c'$ of $P_{q'}$ over $GR(q',m)$ as
\[
w_{L}\left(c'\right)=\sum_{i=1}^{\frac{q'}{2}-1} i\left(n'_i + n'_{q'-i} \right) + \frac{q'}{2} n'_{\frac{q'}{2}}\]
with the same summation index, we can write the Lee weight of a codeword $c$ of $P_q$ over $GR(q,m)$ as
\[
w_{L}\left(c\right)=\sum_{i=1}^{\frac{q'}{2}-1} \{i\left(n_i + n_{q-i} \right) + \left(q'-i\right) \left(n_{q'-i} + n_{q'+i}\right)\} + \frac{q'}{2} \left(n_{\frac{q'}{2}} + n_{\frac{3q'}{2}}\right) +
q' n_{q'}\]
where $n'_i$ and $n_i$ are the numbers of entries $i$ of $c'$ and $c$ respectively.

Define the map $\mu : GR(q,m) \rightarrow GR(q',m)$ as $\mu\left(\xi_q\right) = \xi_{q'}$ and $\mu(c)\equiv c \pmod{q'}$ where $\xi_q$ and $\xi_{q'}$ are roots of the primitive basic irreducible polynomials of $P_q$ and $P_{q'}$ respectively and $c \in GR(q,m)$. Then $\mu\left(P_q\right) = P_{q'}$. From $n'_i = n_i + n_{q'+i}$, we have \begin{eqnarray}
w_{L}\left(c'\right) & = & \sum_{i=1}^{\frac{q'}{2}-1} i\left(n'_i + n'_{q'-i} \right) + \frac{q'}{2} n'_{\frac{q'}{2}} \nonumber \\
& = & \sum_{i=1}^{\frac{q'}{2}-1} \{ i\left(n_i + n_{q-i} \right) + i \left(n_{q'-i} + n_{q'+i}\right)\} + \frac{q'}{2} \left(n_{\frac{q'}{2}} + n_{\frac{3q'}{2}}\right) \nonumber \\
& \leq & \sum_{i=1}^{\frac{q'}{2}-1} \{ i\left(n_i + n_{q-i} \right) + \left(q'-i\right) \left(n_{q'-i} + n_{q'+i}\right)\} + \frac{q'}{2} \left(n_{\frac{q'}{2}} + n_{\frac{3q'}{2}}\right) + q' n_{q'}. \nonumber
\end{eqnarray}

This means that $w_{L}\left(c'\right) \leq w_{L}\left(c\right)$. Thus the minimum Lee weight of $P_{q'}$ over $GR(q',m)$ is always less than or equal to the minimum Lee weight of $P_q$ over $GR(q,m)$.
\end{proof}

Lemma \ref{lemma:min} gives an important information such that the codeword of $P_q$ may have the minimum Lee weight of which image by the map $\mu$ has the minimum Lee weight of $P_{q'}$.

Thus, we show the existence of the codewords of the extended Hamming code with the Hamming weights 4 and 6 over $\mathbb{F}_{2^m}$ concretely. We consider the matrix $P$ as a parity check matrix of an extended Hamming code. Let $\theta$ be a root of $h_2(X)$.

For $m \geq 2$, we know the sum of the following 4 vectors of length $m+1$ is \textbf{0}.
\[
\left(\begin{array}{c}
1\\
0\\
0\\
0\\
\vdots\end{array}\right),\left(\begin{array}{c}
1\\
1\\
0\\
0\\
\vdots\end{array}\right),\left(\begin{array}{c}
1\\
0\\
1\\
0\\
\vdots\end{array}\right),\left(\begin{array}{c}
1\\
1\\
1\\
0\\
\vdots\end{array}\right).\]
The first vector corresponds to the column vector $\left(\begin{array}{c}1\\0\end{array}\right)$ of $P$ in Definition \ref{def:P}, the second and the third vectors correspond to the column vectors $\left(\begin{array}{c}1\\1\end{array}\right)$ and $\left(\begin{array}{c}1\\\theta\end{array}\right)$ of $P$ respectively, and the fourth vector corresponds to the column vector $\left(\begin{array}{c}1\\\theta^i\end{array}\right)$ of $P$ for some $i$. If we put $c_\infty$, $c_0$, $c_1$ and $c_i$ are all 1 and the other entries are zero, then the Hamming weight of $c = \left(c_\infty, c_0, c_1, \ldots, c_{N-1}\right) = \left(1, 1, 1, 0, \ldots, 0, 1, 0, \ldots, 0\right)$ is 4.

For $m \geq 4$, we know the sum of the following 6 vectors of length $m+1$ is \textbf{0}.
\[
\left(\begin{array}{c}
1\\
0\\
0\\
0\\
0\\
\vdots\end{array}\right),\left(\begin{array}{c}
1\\
1\\
0\\
0\\
0\\
\vdots\end{array}\right),\left(\begin{array}{c}
1\\
0\\
1\\
0\\
0\\
\vdots\end{array}\right),\left(\begin{array}{c}
1\\
0\\
0\\
1\\
0\\
\vdots\end{array}\right),\left(\begin{array}{c}
1\\
0\\
0\\
0\\
1\\
\vdots\end{array}\right),\left(\begin{array}{c}
1\\
1\\
1\\
1\\
1\\
\vdots\end{array}\right).\]
The first vector corresponds to the column vector $\left(\begin{array}{c}1\\0\end{array}\right)$ of $P$, the second to the fifth vectors correspond to the column vectors $\left(\begin{array}{c}1\\1\end{array}\right)$, $\left(\begin{array}{c}1\\\theta\end{array}\right)$, $\left(\begin{array}{c}1\\\theta^2\end{array}\right)$, $\left(\begin{array}{c}1\\\theta^3\end{array}\right)$ of $P$ respectively, and the sixth vector corresponds to the column vector $\left(\begin{array}{c}1\\\theta^i\end{array}\right)$ of $P$ for some $i$. If we put $c_\infty$, $c_0$, $c_1$, $c_2$, $c_3$ and $c_i$ are all 1 and the other entries are zero, then the Hamming weight of $c = \left(c_\infty, c_0, c_1, \ldots, c_{N-1}\right) = \left(1, 1, 1, 1, 1, 0, \ldots, 0, 1, 0, \ldots, 0\right)$ is 6.

\vspace{0.5cm}

We give the dependencies among the powers of $\gamma$ where $\gamma$ is a root of $h_4(X)$ used in the proof of theorems.

\begin{lemma}
\emph{(\cite{hammons} and \cite{lint}).}
Let $\gamma$ be a root of a primitive basic irreducible polynomial $h_{4}\left(X\right)$. We have
\begin{enumerate}
\item $\pm\gamma^{j}\pm\gamma^{k}$ is invertible for $0\leq j<k<N$, $m \geq 2$. \label{enu:Lemma1.1}
\item $\gamma^{j}-\gamma^{k}\neq\pm\gamma^{l}$ for distinct $j$, $k$, $l$ in $\left[0,N-1\right]$, $m \geq 2$. \label{enu:Lemma1.2}
\item If $m\geq3$ and $i\neq j$, $k\neq l$ in $\left[0,N-1\right]$, then $\gamma^{i}-\gamma^{j}=\gamma^{k}-\gamma^{l}\Rightarrow i=k$ and $j=l$.\label{enu:Lemma1.3}
\item If $m\geq3$ and $m$ odd, then $\gamma^{i}+\gamma^{j}+\gamma^{k}+\gamma^{l}=0\Rightarrow i=j=k=l$.\label{enu:Lemma1.4}
\end{enumerate}\label{lemma:dep}
\end{lemma}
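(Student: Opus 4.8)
The plan is to reduce each of the four statements to a question about the residue field $\mathbb{F}_{2^{m}}$, using two ingredients. The first is that $GR(4,m)$ is local with maximal ideal $2GR(4,m)$, so an element $c$ is a unit exactly when $\eta(c)\neq 0$. The second is a $2$-adic addition formula on the Teichm\"uller set, which I would establish first: by induction on $k$ one checks
\[(x+y)^{2^{k}}=x^{2^{k}}+y^{2^{k}}+2(xy)^{2^{k-1}}\qquad(k\geq 1)\]
for all $x,y\in GR(4,m)$ (the induction step is immediate because the cross terms acquire a factor $4$), and likewise $\big(\sum_{r}x_{r}\big)^{2^{k}}=\sum_{r}x_{r}^{2^{k}}+2\sum_{r<s}(x_{r}x_{s})^{2^{k-1}}$ for a sum of several terms. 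Taking $k=m$ and $x,y\in\mathcal{T}$ (so $x^{2^{m}}=x$) and recalling $\tau(c)=c^{2^{m}}=a_{0}$ for $c=a_{0}+2a_{1}$, this identifies the second $2$-adic digit of $x+y$ as $a_{1}=(xy)^{2^{m-1}}$, while the first digit $a_{0}\in\mathcal{T}$ is, as always, the lift of $\eta(x+y)$. In particular $x+y\notin\mathcal{T}$ whenever $x,y\in\mathcal{T}\setminus\{0\}$, since then $a_{1}=(xy)^{2^{m-1}}\neq 0$.

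With this in hand, (1) and (2) are immediate. For (1): $\eta(\pm\gamma^{j}\pm\gamma^{k})=\theta^{j}+\theta^{k}$ (the signs disappear in characteristic $2$), which is nonzero because $\theta$ has multiplicative order $N$ and $0\leq j<k<N$; hence $\pm\gamma^{j}\pm\gamma^{k}$ lies outside $2GR(4,m)$ and is a unit. For (2): an equality $\gamma^{j}-\gamma^{k}=\pm\gamma^{l}$ rearranges to $\gamma^{a}=\gamma^{b}+\gamma^{c}$ with $\gamma^{b},\gamma^{c}\in\mathcal{T}\setminus\{0\}$, and then $\gamma^{b}+\gamma^{c}\notin\mathcal{T}$ contradicts $\gamma^{a}\in\mathcal{T}$.

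For (3), I would rewrite the hypothesis as $\gamma^{i}+\gamma^{l}=\gamma^{j}+\gamma^{k}$ and match $2$-adic digits on the two sides via the addition formula. This makes the equation equivalent to $\theta^{i}+\theta^{l}=\theta^{j}+\theta^{k}$ in $\mathbb{F}_{2^{m}}$ together with $(i+l)2^{m-1}\equiv(j+k)2^{m-1}\pmod{N}$, hence $i+l\equiv j+k\pmod{N}$ since $\gcd(2,N)=1$, so also $\theta^{i}\theta^{l}=\theta^{j}\theta^{k}$. The degenerate case $i=l$ cannot occur: it makes the left-hand first digit $a_{0}$ vanish, forcing $\theta^{j}+\theta^{k}=0$, i.e. $j=k$, and then $2\gamma^{i}=2\gamma^{j}$ gives $i=j$, contrary to hypothesis; symmetrically $j\neq k$. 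So $\theta^{i}\neq\theta^{l}$ and $\theta^{j}\neq\theta^{k}$, and $\{\theta^{i},\theta^{l}\}$ and $\{\theta^{j},\theta^{k}\}$ are both the set of two distinct roots of $t^{2}+(\theta^{i}+\theta^{l})t+\theta^{i}\theta^{l}$ over $\mathbb{F}_{2^{m}}$; hence $\{i,l\}=\{j,k\}$, and together with $i\neq j$ this gives $i=k$ and $j=l$.

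For (4), write the four exponents as $a_{1},\dots,a_{4}$ and put $\beta_{r}=(\theta^{a_{r}})^{2^{m-1}}$, so $\beta_{r}^{2}=\theta^{a_{r}}\neq 0$. Reducing $\sum_{r}\gamma^{a_{r}}=0$ modulo $2$ gives $\sum_{r}\beta_{r}^{2}=0$, i.e. $e_{1}:=\beta_{1}+\beta_{2}+\beta_{3}+\beta_{4}=0$; raising $\sum_{r}\gamma^{a_{r}}=0$ to the $2^{m}$-th power with the multi-term expansion and using $\sum_{r}\gamma^{a_{r}}=0$ again leaves $2\sum_{r<s}(\gamma^{a_{r}}\gamma^{a_{s}})^{2^{m-1}}=0$, whose reduction modulo $2$ is $e_{2}:=\sum_{r<s}\beta_{r}\beta_{s}=0$. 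If the $\beta_{r}$ were not all equal, then because $e_{1}=0$ one may pair them as $\{\beta_{1},\beta_{2}\},\{\beta_{3},\beta_{4}\}$ with $s:=\beta_{1}+\beta_{2}=\beta_{3}+\beta_{4}\neq 0$, and $e_{2}=0$ then gives $\beta_{1}\beta_{2}+\beta_{3}\beta_{4}=s^{2}$. Since $\beta_{1},\beta_{3}\in\mathbb{F}_{2^{m}}$ are roots of $t^{2}+st+\beta_{1}\beta_{2}$ and $t^{2}+st+\beta_{3}\beta_{4}$ respectively, the Artin--Schreier criterion forces $\mathrm{Tr}_{\mathbb{F}_{2^{m}}/\mathbb{F}_{2}}(\beta_{1}\beta_{2}/s^{2})=\mathrm{Tr}_{\mathbb{F}_{2^{m}}/\mathbb{F}_{2}}(\beta_{3}\beta_{4}/s^{2})=0$; adding these and using $\beta_{1}\beta_{2}+\beta_{3}\beta_{4}=s^{2}$ yields $\mathrm{Tr}_{\mathbb{F}_{2^{m}}/\mathbb{F}_{2}}(1)=0$, which is false when $m$ is odd. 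Hence $\beta_{1}=\beta_{2}=\beta_{3}=\beta_{4}$, so all $\theta^{a_{r}}$ coincide and $i=j=k=l$. The step I expect to be the main obstacle is part (4): isolating exactly where the oddness of $m$ is used (it enters only through $\mathrm{Tr}_{\mathbb{F}_{2^{m}}/\mathbb{F}_{2}}(1)=1$, as the Artin--Schreier obstruction to splitting both quadratics simultaneously) and handling the degenerate cases in (3)--(4); everything else is routine once the addition formula is established. Since (1)--(4) are classical, one could alternatively just cite \cite{lint} and \cite{hammons}.
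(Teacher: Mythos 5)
Your proposal is correct, and it is worth noting that the paper itself gives no proof of this lemma at all: it is stated with a citation to \cite{hammons} and \cite{lint}, so there is no internal argument to compare against line by line. Your route is a self-contained substitute. The $2$-adic addition formula $(x+y)^{2^{m}}=x+y+2(xy)^{2^{m-1}}$ on $\mathcal{T}$, combined with the locality of $GR(4,m)$ (unit $\Leftrightarrow$ nonzero under $\eta$) and uniqueness of the $2$-adic digits, does reduce (1)--(3) cleanly to arithmetic in $\mathbb{F}_{2^{m}}$; your digit-matching argument for (3), including the exclusion of the degenerate cases $i=l$ and $j=k$, is sound, and in fact never uses $m\geq 3$, so you prove a slightly stronger statement than quoted. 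For (4), the verification that $e_{1}=e_{2}=0$ via reduction mod $2$ and the $2^{m}$-th power expansion is correct, the repairing so that $s=\beta_{1}+\beta_{2}=\beta_{3}+\beta_{4}\neq 0$ is always possible when the $\beta_{r}$ are not all equal, and the Artin--Schreier trace criterion applied to the two split quadratics $t^{2}+st+\beta_{1}\beta_{2}$ and $t^{2}+st+\beta_{3}\beta_{4}$ yields $\mathrm{Tr}_{\mathbb{F}_{2^{m}}/\mathbb{F}_{2}}(1)=0$, contradicting $m$ odd. This is an equivalent but slightly different packaging of the classical argument in the cited sources, where oddness of $m$ usually enters through the absence of primitive cube roots of unity in $\mathbb{F}_{2^{m}}$ (i.e.\ $\mathbb{F}_{4}\not\subseteq\mathbb{F}_{2^{m}}$); your trace formulation isolates the same obstruction in a way that localizes exactly where the parity of $m$ is used. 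In short: correct, complete modulo standard facts (locality, uniqueness of Teichm\"uller digits, the trace criterion), and more informative than the paper, which only cites.
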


Hammons et al. \cite{hammons} determined the minimum Lee weight of $P_4$ and just showed the existence of the codewords with minimum Lee weight. Next we list all the codewords with the minimum Lee weight of $P_4$ by transforming the codewords of the extended Hamming code over $\mathbb{F}_{2^m}$.

On the following theorems, the coordinates $\{0,a,b,c,d,e\}$ correspond to the coordinates of nonzero entries of the codewords of the extended Hamming code with the Hamming weights 4 or 6.

\begin{theorem}\label{thm:list}
For odd $m \geq 3$, the minimum Lee weight of $P_{4}$ with length $2^{m}$ is $6$. The codeword $\textbf{c} = (c_\infty, c_0, \ldots, c_{N-1})$ of $P_4$ with the minimum Lee weight $6$ has one of the following forms:

\begin{table}[H]
\begin{center}
\begin{tabular}{c|c|cccccc}
Case & $c_{\infty}$ & $c_{0}$ & $c_{a}$ & $c_{b}$ & $c_{c}$ & $c_{d}$ & $c_{e}$\\
\hline Case 1 & -1 & 1 & 1 & 1 & 2 & 0 & 0\\
Case 2 & 1 & -1 & 1 & 1 & 2 & 0 & 0\\
Case 3 & 2 & -1 & 1 & 1 & 1 & 0 & 0\\
Case 4 & 0 & -1 & 1 & 1 & 1 & 2 & 0\\
\hline Case 5 & 0 & -1 & 1 & 1 & 1 & 1 & 1\\
Case 6 & 0 & -1 & -1 & -1 & 1 & 1 & 1\\
Case 7 & -1 & 1 & 1 & 1 & 1 & 1 & 0\\
Case 8 & 1 & -1 & 1 & 1 & 1 & 1 & 0\\
Case 9 & -1 & -1 & -1 & 1 & 1 & 1 & 0\\
Case 10 & 1 & -1 & -1 & -1 & 1 & 1 & 0
\end{tabular}
\end{center}\caption{List of codewords with the minimum Lee weight of $P_4$ for odd $m\geq3$}\label{table:case4odd}
\end{table}

The codewords for Cases $1$ to $4$ are obtained from the codewords of the extended Hamming code over $\mathbb{F}_{2^m}$ with the minimum Hamming weight $4$ and the codewords for Cases $5$ to $10$ are obtained from the codewords with the Hamming weight $6$.

For even $m \geq 4$, the minimum Lee weight of $P_{4}$ with length $2^{m}$ is $4$. The codeword of $P_4$ with the minimum Lee weight $4$ has one of the following $2$ forms :
\begin{table}[H]
\begin{center}
\begin{tabular}{c|c|cccc}
Case & $c_{\infty}$ & $c_{0}$ & $c_{a}$ & $c_{b}$ & $c_{c}$\\
\hline Case 11 & 1 & 1 & 1 & 1 & 0\\
Case 12 & 0 & 1 & 1 & 1 & 1
\end{tabular}
\end{center}\caption{List of codewords with the minimum Lee weight of $P_4$ for even $m\geq4$}\label{table:case4even}
\end{table}
\end{theorem}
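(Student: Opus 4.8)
The plan is to reduce the whole statement to the problem of \emph{lifting} codewords of the extended Hamming code $P_2$ (the binary code whose weight-$4$ and weight-$6$ codewords were exhibited before Lemma~\ref{lemma:dep}) up to codewords of $P_4$, and then to carry out the lift by explicit computation in $GR(4,m)$. Given $\textbf{c}\in P_4$, reduce modulo $2$ to get $\eta(\textbf{c})=(\eta(c_i))_i\in P_2$ and write each coordinate as $c_i=e_i+2f_i$ with $e_i,f_i\in\{0,1\}$ and $e_i=\eta(c_i)$. Since $w_L(0)=0$, $w_L(1)=w_L(3)=1$, $w_L(2)=2$, we get $w_L(\textbf{c})=w_H(\eta(\textbf{c}))+2t$, where $t$ is the number of coordinates equal to $2$. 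As every nonzero codeword of $P_2$ has even Hamming weight $\ge 4$, it follows that $w_L^m(P_4)$ is even and $\ge 4$; and if $w_L(\textbf{c})\le 6$ then either $\eta(\textbf{c})=\textbf{0}$ (so $\textbf{c}=2\textbf{c}'$ with $\textbf{c}'$ a nonzero binary codeword of $P_2$, forcing $w_L(\textbf{c})\ge 8$, excluded), or $w_H(\eta(\textbf{c}))=4$ with $t\le 1$, or $w_H(\eta(\textbf{c}))=6$ with $t=0$. Hence each minimum Lee weight codeword is a lift of a weight-$4$ or weight-$6$ codeword of $P_2$.

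Next I would set up the lifting equations. Substituting $c_i=e_i+2f_i$ into \eqref{eq:P1} gives $\sum_i e_i+2\sum_i f_i\equiv 0\pmod 4$, which (with the $e_i$ fixed by $\eta(\textbf{c})$) is a parity condition on $\sum_i f_i$ and in particular fixes $f_\infty$. Since $\eta(\textbf{c})\in P_2$, the element $\sum_{i\ge 0}e_i\xi^{i}$ lies in $2GR(4,m)$; write it $2g$ and set $\overline g=\eta(g)\in\mathbb{F}_{2^m}$. Then \eqref{eq:P2} becomes the single $\mathbb{F}_{2^m}$-equation $\sum_{i\ge 0}f_i\,\theta^{i}=\overline g$. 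Thus, for a fixed weight-$4$ or weight-$6$ codeword $\eta(\textbf{c})$, the lifts correspond to the $\{0,1\}$-solutions $(f_i)$ of this affine system, the lift having Lee weight $w_H(\eta(\textbf{c}))$ plus twice the number of solution coordinates outside $\mathrm{supp}(\eta(\textbf{c}))$. In particular a Lee weight $4$ codeword exists iff some weight-$4$ $\eta(\textbf{c})$ admits a solution supported on $\mathrm{supp}(\eta(\textbf{c}))$, i.e.\ iff the corresponding $\overline g$ is a $\{0,1\}$-combination of the $\theta^{i}$ with $i\in\mathrm{supp}(\eta(\textbf{c}))$, $i\ge 0$.

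The carry $\overline g$ is computed from the elementary identity that for $x,y\in\mathcal{T}$ one has $x+y=\tau(x+y)+2u$ with $\eta(u)^2=\eta(x)\eta(y)$ (square the relation and apply the Frobenius automorphism \eqref{eq:frobenius}), used once for a weight-$4$ support and iterated once more for a weight-$6$ support; this expresses $\overline g$ in closed form in terms of square roots of products of the relevant $\theta^{i}$. One then runs the finite case analysis over the admissible sign patterns, invoking Lemma~\ref{lemma:dep}: parts~(\ref{enu:Lemma1.1})--(\ref{enu:Lemma1.3}) locate the single coordinate carrying the value $2$ in Cases~1--4 and give uniqueness of the subset-sum representations, while part~(\ref{enu:Lemma1.4}) — equivalently, the vanishing of $\mathrm{Tr}_{\mathbb{F}_{2^m}/\mathbb{F}_2}(1)$ precisely for even $m$ — is where the parity of $m$ enters. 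The outcome is: for a weight-$4$ $\eta(\textbf{c})$ a $t=0$ lift exists only when $m$ is even and only when the support has a special (``cube-coset'') form, in which case the unique admissible sign pattern up to global sign is the all-$(\pm1)$ one, yielding Cases~11 and~12; such supports exist for every even $m\ge 4$, so $w_L^m(P_4)=4$ for even $m$. For odd $m$ no weight-$4$ $\eta(\textbf{c})$ admits a $t=0$ lift, so the cheapest lift has $t=1$, and determining the forced extra $2$-coordinate and the admissible signs gives exactly Cases~1--4; and for odd $m$ the $t=0$ lifts of weight-$6$ codewords of $P_2$ are exactly Cases~5--10 (vacuous for $m=3$, where $P_2$ has no weight-$6$ codeword, nonempty for $m\ge 5$). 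Combining this with the parity and lower-bound facts of the first paragraph and the existence of a Lee weight $6$ lift (e.g.\ Case~1, since $P_2$ has weight-$4$ codewords for all $m\ge 3$) gives $w_L^m(P_4)=6$ for odd $m$ and $=4$ for even $m$, together with the classifications of Tables~\ref{table:case4odd} and~\ref{table:case4even}.

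The main obstacle is the case analysis of the last step and making it uniform in $m$. Concretely, one must show that the carry $\overline g$ lies in the span of the available $\theta^{i}$ exactly in the cases claimed, locate the forced ``extra'' $2$-coordinate in Cases~1--4 and verify it is a genuinely new coordinate, and rule out every spurious sign pattern; it is precisely here that Lemma~\ref{lemma:dep}(\ref{enu:Lemma1.1}) and~(\ref{enu:Lemma1.4}) are indispensable and where the even/odd dichotomy gets isolated.
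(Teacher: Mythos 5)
Your proposal is correct and takes essentially the same route as the paper: you classify the minimum Lee weight codewords by their binary image under $\eta$ (Hamming weight $4$ or $6$), compute the $2$-adic carry via the $\tau$/Frobenius square-root identity (the same computation as the paper's determination of $a_0$ and $a_1$), and invoke Lemma \ref{lemma:dep} together with the cube-root-of-unity criterion to separate odd and even $m$ and pin down the admissible patterns — in effect the paper's argument repackaged as the lifting problem $c_i=e_i+2f_i$ with the affine condition $\sum_i f_i\theta^i=\overline g$. The only substantive differences are that you derive the lower bound and the odd/even dichotomy internally where the paper leans on Hammons et al., and that, like the paper, you leave the exhaustive sign-pattern analysis at the level of representative cases; the outcomes you assert for it do match Tables \ref{table:case4odd} and \ref{table:case4even}.
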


\begin{proof}

The equation \eqref{eq:P1} implies that every codeword has even number of odd entries. Therefore Lee weight is even. So, the minimum Lee weight is even. The codeword obtained from multiplying the nonzero entries of the other codeword by $-1$ has the same Lee weight as that of the other. Therefore we may choose one of these codewords.

We assume $m$ is odd. We know the minimum Hamming weight of the extended Hamming code $H$ over a finite field is 4. If the codeword of $P_4$ has the Lee weight 6 and its image has the Hamming weight 4, then the nonzero entries are determined as $\left\{ -1,1,1,1,2\right\}.$

If the codewords of $P_4$ has the Lee weight 6 and its image has the Hamming weight 6, then we have the following 2 cases :
\begin{enumerate}
\item Codewords with nonzero entries $\left\{ -1,1,1,1,1,1\right\}. $
\item Codewords with nonzero entries $\left\{ -1,-1,-1,1,1,1\right\}. $
\end{enumerate}

We need to check that the codeword with this form satisfy equation \eqref{eq:P2}. Every element $\alpha_4 \in GR(4,m) $ has a unique representation
\begin{equation}
\alpha_4=a_{0}+2a_{1}\label{eq:rep4}\end{equation}
where $a_0, a_1 \in \mathcal{T}=\{0,1,\gamma,\ldots,\gamma^{n-1}\}$. We fix the codeword of the extended Hamming code $H$ with the minimum Hamming weight 4, so that it will satisfy $\theta^a + \theta^b + \theta^c + \theta^d = 0 $ for some $a$, $b$, $c$ and $d$. We can write this equivalently as $\theta^a \left( 1 + \theta^{b-a} + \theta^{c-a} + \theta^{d-a} \right) = 0 $. Therefore, without loss of generality, we may assume that $ 1 + \theta^a + \theta^b + \theta^c = 0 $ is satisfied for some $a$, $b$ and $c$. We choose the vector of $GR(4,m)$ with nonzero entries $\{-1,1,1,1,2\}$. We assume that $c_\infty$ is 0 or 2. The coordinates 0, $a$, $b$ and $c$ correspond to the nonzero entries of the codeword of $H$. We may assume $c_0 = -1$ without loss of generality. Otherwise, for example, for the case $c_a = -1$, it can be reduced to the case $c_0 = -1$ by multiplying $1 - \gamma^a + \gamma^b + \gamma^c = a_0 + 2a_1 $ by $\gamma^{-a}$. We put

\begin{equation}
-1+\gamma^a+\gamma^b+\gamma^c=a_{0}+2a_{1},\mbox{ } a_0,a_1\in\mathcal{T}.\label{eq:4rep3}\end{equation}

\noindent By applying the map $\tau$ in \eqref{eq:4rep3}, we can obtain
\begin{eqnarray}
a_{0}=\left(-1+\gamma^{a}+\gamma^{b}+\gamma^{c}\right)^{2^{m}} & = & 1+\gamma^{a}+\gamma^{b}+\gamma^{c}+2\left(\gamma^{2^{m-1}a}+\gamma^{2^{m-1}b}+\gamma^{2^{m-1}c}\right.\nonumber\\
 &  & \left.+\gamma^{2^{m-1}\left(a+b\right)}+\gamma^{2^{m-1}\left(a+c\right)}+\gamma^{2^{m-1}\left(b+c\right)}\right).\label{eq:4rep4} \end{eqnarray}

\noindent If we multiply \eqref{eq:4rep4} by 2, then we have \begin{equation}
2a_{0}=2\left(-1+\gamma^{a}+\gamma^{b}+\gamma^{c}\right).\end{equation}
From $2\left(-1+\gamma^{a}+\gamma^{b}+\gamma^{c}\right)=2\left(1 + \theta^a + \theta^b + \theta^c\right) = 0$, we see $a_{0}$ is zero.

We can calculate $a_{1}$ as follows: \begin{eqnarray}
2a_{1} & = & \left(-1+\gamma^{a}+\gamma^{b}+\gamma^{c}\right)-\left(-1+\gamma^{a}+\gamma^{b}+\gamma^{c}\right)^{2^{m}}\nonumber \\
& = & 2\left(1+\gamma^{2^{m-1}a}+\gamma^{2^{m-1}b}+\gamma^{2^{m-1}c}+\gamma^{2^{m-1}\left(a+b\right)}+\gamma^{2^{m-1}\left(a+c\right)}+\gamma^{2^{m-1}\left(b+c\right)}\right)\nonumber, \\
a_{1} & = & 1+\theta^{2^{m-1}a}+\theta^{2^{m-1}b}+\theta^{2^{m-1}c}+\theta^{2^{m-1}\left(a+b\right)}+\theta^{2^{m-1}\left(a+c\right)}+\theta^{2^{m-1}\left(b+c\right)}.\label{eq:4case3-2}\end{eqnarray}

\noindent By using the Frobenius automorphism, we have \[1+\theta^{2^{m-1}a}+\theta^{2^{m-1}b}+\theta^{2^{m-1}c} = \left(1+\theta^{a}+\theta^{b}+\theta^{c}\right)2^{m-1} = 0.\] Thus we have
\begin{equation}
a_{1}=\theta^{2^{m-1}\left(a+b\right)}+\theta^{2^{m-1}\left(a+c\right)}+\theta^{2^{m-1}\left(b+c\right)}.\label{eq:4case3-3}\end{equation}

\noindent Hence the coordinate $d$ such that \begin{equation}
-1+\gamma^{a}+\gamma^{b}+\gamma^{c}=2\gamma^{d}\end{equation}
is determined as $2\gamma^{d}=2\left(\theta^{2^{m-1}\left(a+b\right)}+\theta^{2^{m-1}\left(a+c\right)}+\theta^{2^{m-1}\left(b+c\right)}\right).$ Otherwise $-1+\gamma^{a}+\gamma^{b}+\gamma^{c}=0$.

For the case $-1+\gamma^{a}+\gamma^{b}+\gamma^{c}=0$, the codeword must have $c_{\infty}=2$ to satisfy the first row equation \eqref{eq:P1} and its image satisfies $\theta^{a+b}+\theta^{a+c}+\theta^{b+c} = 0$. Thus the codeword with Lee weight 6 is given as follows :
\begin{equation}
\left(\begin{array}{cccccccccc}
c_{\infty} & \ldots & c_{0} & \ldots & c_{a} & \ldots & c_{b} & \ldots & c_{c} & \ldots\\
2 & \ldots & -1 & \ldots & 1 & \ldots & 1 & \ldots & 1 & \ldots\end{array}\right).\end{equation}
This is Case 3 in Table 1 of the theorem.

For the case $-1+\gamma^{a}+\gamma^{b}+\gamma^{c}=2\gamma^{d}$ where $2\gamma^{d} = 2\left(\theta^{2^{m-1}(a+b)}+\theta^{2^{m-1}(a+c)}+\right.$ $\left.\theta^{2^{m-1}(b+c)}\right)$, the codewords must have $c_{\infty}=0$ to satisfy the first row equation \eqref{eq:P1}. Thus the codeword with Lee weight 6 is given as follows :\begin{equation}
\left(\begin{array}{cccccccccccc}
c_{\infty} & \ldots & c_{0} & \ldots & c_{a} & \ldots & c_{b} & \ldots & c_{c} & \ldots & c_{d} & \ldots\\
0 & \ldots & -1 & \ldots & 1 & \ldots & 1 & \ldots & 1 & \ldots & 2 & \ldots\end{array}\right).\end{equation}
This is Case 4 in Table 1 of the theorem. If we assume $c_\infty=\pm 1$, then we obtain Cases 1 and 2 by similar calculations and the coordinate $c$ is determined from the codeword of the extended Hamming code.

We consider the vector of $GR(4,m)$ whose image by $\eta$ is the codeword of $H$ with Hamming weight 6. Then we obtain Cases 5 and 6 if $c_\infty=0$ and Cases 7, 8, 9 and 10 if $c_\infty=\pm 1$ in the same way.

Next, we assume $m$ is even. To satisfy \eqref{eq:P1}, the codewords of $P_4$ with the Lee weight 4 have nonzero values $\{1,1,-1,-1\}$ or $\{ 1,1,1,1\}$.

If we consider the codewords with nonzero values $\{1,1,-1,-1\}$ and $c_{\infty}=0$, then it must satisfy $\gamma^a + \gamma^b - \gamma^c - \gamma^d = 0$, that is $a = d$ and $b = c$ from Lemma 2. It contradicts the assumption. On the other hand, if the codeword has $c_\infty = \pm 1$ it will also contradicts point \eqref{enu:Lemma1.2} of Lemma \ref{lemma:dep}.

Hammons et al. \cite{hammons} showed that $\gamma^{2t}+\gamma^t+1 = 0$ is satisfied where $t=(2^m-1)/3$ and the codeword with nonzero entries $c_\infty = c_0 = c_t = c_{2t} = 1$ has the Lee weight 4. The codeword with $c_\infty = 0$ and 4 other nonzero entries also has the Lee weight 4. Hence we have Case 11 and Case 12.
\end{proof}

We find the minimum Lee weight of $P_{8}$ over Galois rings of characteristic 8 by transformation of the codewords of $P_{4}$ in Tables 1 and 2.

\begin{theorem}\label{thm:lee}
The minimum Lee weight of $P_{8}$ over Galois rings of characteristic $8$ with length $2^{m}$ is $6$, for $m\geq3$ .
\end{theorem}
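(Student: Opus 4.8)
For the final statement, the plan is to establish $w_L^m(P_8)\ge 6$ and $w_L^m(P_8)\le 6$ separately, in each case splitting according to the parity of $m$, in parallel with Theorem~\ref{thm:list}. At the outset I record, as in that proof, that \eqref{eq:P1} forces every codeword of $P_8$ to have an even number of odd coordinates, so its Lee weight is even; hence it suffices to exclude Lee weights $2$ and $4$ and to exhibit one codeword of Lee weight $6$.

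For the lower bound I would first invoke Lemma~\ref{lemma:min} with $q=8$, $q'=4$, which gives $w_L^m(P_8)\ge w_L^m(P_4)$. For odd $m\ge3$ this is already $6$ by Theorem~\ref{thm:list} and nothing more is needed. For even $m\ge4$ it only yields $w_L^m(P_8)\ge4$, so I must still rule out a codeword of Lee weight $4$ (weight $2$ is automatic: by Theorem~\ref{thm:ham} every nonzero codeword has Hamming weight $\ge4$, and the Lee weight dominates the Hamming weight). A Lee-weight-$4$ codeword therefore has Hamming weight exactly $4$ with all four nonzero entries of Lee weight $1$, i.e.\ equal to $\pm1$, and by \eqref{eq:P1} two of them are $+1$ and two are $-1$. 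I then reduce the second parity check \eqref{eq:P2} modulo $4$ via the ring homomorphism $\mu$ of Lemma~\ref{lemma:min}. If $\infty$ is one of the four coordinates, the remaining three give a relation $\gamma^{i}-\gamma^{j}=\pm\gamma^{k}$ with distinct exponents, contradicting part~\ref{enu:Lemma1.2} of Lemma~\ref{lemma:dep}; if not, the four coordinates give $\gamma^{i}-\gamma^{k}=\gamma^{l}-\gamma^{j}$, so part~\ref{enu:Lemma1.3} forces $i=l$ and $k=j$, i.e.\ a repeated coordinate. Either way there is a contradiction, hence $w_L^m(P_8)\ge6$ for all $m\ge3$.

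For the upper bound I would produce a Lee-weight-$6$ codeword by transforming a minimum-Lee-weight codeword of $P_4$ from Tables~\ref{table:case4odd} and \ref{table:case4even}. For even $m\ge4$ one has $3\mid2^m-1$; put $t=(2^m-1)/3$, so $\xi^{t}$ is a primitive cube root of unity in $GR(8,m)$, and since $\xi^{t}-1$ is a unit (its $\eta$-image $\theta^{t}-1$ is nonzero) the identity $(\xi^{t}-1)(1+\xi^{t}+\xi^{2t})=\xi^{3t}-1=0$ gives $1+\xi^{t}+\xi^{2t}=0$. Hence the vector with $c_\infty=-3$, $c_0=c_t=c_{2t}=1$ and all other entries $0$ satisfies \eqref{eq:P1} and \eqref{eq:P2} and has Lee weight $3+1+1+1=6$; this is exactly the Case~11 codeword of $P_4$ with its $\infty$-coordinate changed from $1$ to $-3$. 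For odd $m$ I would instead start from a Lee-weight-$6$ codeword of $P_4$ of Case~1, 2 or 3 in Table~\ref{table:case4odd}, each of which carries precisely one entry equal to $\pm2$, lift every $\mathbb Z_4$-coordinate to its Lee-weight-preserving representative in $\mathbb Z_8$ (so the lift still reduces modulo $4$ to the original codeword, and both parity-check values lie in $4\,GR(8,m)$), and then kill these two values by altering the $\pm2$-coordinate and, if needed, the $\infty$-coordinate by multiples of $4$ — an alteration that leaves the Lee weight of the $\pm2$-coordinate unchanged, while the $\infty$-coordinate is readjusted to restore \eqref{eq:P1}. It is the presence of the $\pm2$-entry that makes this correction ``free'', converting a $P_4$ codeword of Lee weight $6$ into a $P_8$ codeword of Lee weight $6$.

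The step I expect to be the main obstacle is precisely this last verification for odd $m$: showing, uniformly for all odd $m\ge3$ (including $m=3$), that some Case-1/2/3 codeword of $P_4$ has a lift whose two parity-check defects over $GR(8,m)$ can genuinely be cancelled by a single correction supported on the $\pm2$-coordinate (or on $\infty$), rather than on a $\pm1$-coordinate or a zero coordinate, which would raise the Lee weight to $8$ or $10$. Establishing this amounts to computing the third $2$-adic digit of sums such as $-1+\xi^{a}+\xi^{b}+\xi^{c}$ in $GR(8,m)$ by means of the map $\tau$ in \eqref{eq:taumap} and the Frobenius automorphism \eqref{eq:frobenius} — one $2$-adic level beyond the computation of the second digit in the proof of Theorem~\ref{thm:list} — and then using the exponent identities of Lemma~\ref{lemma:dep} to check that the defect is forced to sit on the $\pm2$-coordinate.
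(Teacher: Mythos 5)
Your treatment of odd $m$ stops exactly where the real work begins. The construction of a Lee-weight-$6$ codeword of $P_8$ for odd $m$ --- the step you yourself defer as ``the main obstacle'' --- is the core of the paper's proof, and your proposal does not carry it out. The paper starts from Case 1 of Table \ref{table:case4odd}, whose check equation is $1+\gamma^{a}+\gamma^{b}=2\gamma^{c}$, writes $1+\xi^{a}+\xi^{b}=a_{0}+2a_{1}+4a_{2}$ with $a_{0},a_{1},a_{2}\in\mathcal{T}$, and shows via the map $\tau$ and the Frobenius automorphism that $a_{0}=0$, $a_{2}=0$ and $a_{1}$ reduces to $\gamma^{2^{m-1}a}+\gamma^{2^{m-1}b}+\gamma^{2^{m-1}(a+b)}$, i.e.\ that $1+\xi^{a}+\xi^{b}=2\xi^{c}$ holds \emph{exactly} in $GR(8,m)$; only then does the vector with $c_{\infty}=-1$, $c_{0}=c_{a}=c_{b}=1$, $c_{c}=6$ satisfy both \eqref{eq:P1} and \eqref{eq:P2} with Lee weight $6$. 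Until you perform this third-digit computation you have exhibited no Lee-weight-$6$ codeword of $P_8$ for odd $m$, so the theorem remains unproved precisely in the case where your lower bound is immediate. Note also that your fallback of ``altering, if needed, the $\infty$-coordinate by multiples of $4$'' cannot rescue a bad defect: the $\infty$-coordinate does not occur in \eqref{eq:P2} at all, and adding $4$ to a $\pm1$ entry raises its Lee weight from $1$ to $3$, pushing the total to $8$. Your plan therefore succeeds only if the defect equals $4\xi^{c}$, so that the single replacement $2\mapsto6$ at the $2$-coordinate repairs both checks at once --- and that is exactly the fact that must be, and in the paper is, proved.

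The rest of your outline is sound, and two pieces genuinely improve on the paper. The published proof obtains the lower bound only from Lemma \ref{lemma:min}, which for even $m$ gives just $\geq 4$; your exclusion of a Lee-weight-$4$ codeword of $P_8$ (Hamming weight exactly $4$ by Theorem \ref{thm:ham}, hence four entries $\pm1$, two of each sign by \eqref{eq:P1} over $\mathbb{Z}_{8}$, then a contradiction with parts \ref{enu:Lemma1.2} and \ref{enu:Lemma1.3} of Lemma \ref{lemma:dep} after reducing by $\mu$) supplies an argument the paper omits for even $m$. Likewise your derivation of $1+\xi^{t}+\xi^{2t}=0$ from $(\xi^{t}-1)(1+\xi^{t}+\xi^{2t})=\xi^{3t}-1=0$ together with the invertibility of $\xi^{t}-1$ is cleaner than the paper's digit computation, and your even-$m$ codeword with $c_{\infty}=-3$, $c_{0}=c_{t}=c_{2t}=1$ is the paper's $(5,1,1,1)$. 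But these are the easier halves; as written, the odd-$m$ construction --- the substance of the theorem --- is missing.
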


\begin{proof}

We know that the minimum Lee weight of $P_4$ is 6 for odd extensions and 4 for even extensions \cite{hammons}. For odd extension, if there exists the codeword with Lee weight 6 of $P_8$, then the minimum Lee weight of $P_8$ is 6 from Lemma \ref{lemma:min}.

The image by the map $\mu$ of this codeword is a codeword of $P_4$ with Lee weight 6 and the image under the map $\eta$ is a codeword of an extended Hamming code with the Hamming weight 4 or 6.

Notice that any element $\alpha_8 \in GR(8,m)$ has the unique representation \begin{equation*}
\alpha_8 = a_0 + 2a_1 + 4a_2
\end{equation*}
where $a_0,a_1,a_2 \in \mathcal{T}$ and $\mu\left(\alpha_8\right) \in GR(4,m)$ and $\tau\left(\alpha_8\right) = \alpha_8^{2^m} = a_0$.

We assume that the image of the codeword of $P_8$ by the map $\mu$ is the codeword of $P_4$ which satisfies $1+\gamma^{a}+\gamma^{b}=2\gamma^{c},a \neq b \neq c,$ for Case 1 in Table 1 of Theorem 3.

\noindent We put
\begin{equation}
1+\xi^{a}+\xi^{b}=a_{0}+2a_{1}+4a_{2},\mbox{ } a_0,a_1,a_2\in\mathcal{T}.\label{eq:rep1}\end{equation}

\noindent By applying the map $\tau$ to the equation \eqref{eq:rep1}, we can write
\begin{eqnarray}
a_{0} & = & \left(1+\xi^{a}+\xi^{b}\right){}^{2^{m}}\nonumber\\
 & =& 1+\xi^{a}+\xi^{b}+4\left(1+\xi^{2^{m-1}a}+\xi^{2^{m-1}b}\right)\cdot \left(\xi^{2^{m-2}a}+\xi^{2^{m-2}b}+\xi^{2^{m-2}\left(a+b\right)}\right)\label{eq:case1-1}\\
 &  & +6\left(\xi^{2^{m-1}a}+\xi^{2^{m-1}b}+\xi^{2^{m-1}\left(a+b\right)}\right).\nonumber \end{eqnarray}

\noindent From $4\left(1+\left(\xi^{a}\right)^{2^{m-1}}+\left(\xi^{b}\right)^{2^{m-1}}\right) = 4\left(1+\left(\theta^{a}\right)^{2^{m-1}}+\left(\theta^{b}\right)^{2^{m-1}}\right) = 4\left(1+\theta^{a}+\theta^{b}\right)^{2^{m-1}} = 0$,
we have

\begin{equation}
a_{0}=1+\xi^{a}+\xi^{b}+6\left(\xi^{2^{m-1}a}+\xi^{2^{m-1}b}+\xi^{2^{m-1}\left(a+b\right)}\right).\label{eq:case1-2}\end{equation}

\noindent We can verify $a_0 = 0$, by multiplying \eqref{eq:case1-2} by 4,
\begin{equation*}
4a_0 = 4\left(1+\xi^{a}+\xi^{b}\right)=4\left(1+\theta^{a}+\theta^{b}\right) = 0.\end{equation*}

\noindent We can calculate $a_{1}$ and $a_{2}$.
\begin{eqnarray}
2a_{1}+4a_{2} & = & \left(1+\xi^{a}+\xi^{b}\right)-\left(1+\xi^{a}+\xi^{b}\right)^{2^{m}},\nonumber \\
2a_{1}+4a_{2} & = & 2\left(\xi^{2^{m-1}a}+\xi^{2^{m-1}b}+\xi^{2^{m-1}\left(a+b\right)}\right),\nonumber \\
a_{1}+2a_{2} & = & \gamma^{2^{m-1}a}+\gamma^{2^{m-1}b}+\gamma^{2^{m-1}\left(a+b\right)}.\label{eq:case1-3}\end{eqnarray}

\noindent Let $\gamma^{u}=\gamma^{2^{m-1}a}$ and $\gamma^{v}=\gamma^{2^{m-1}b}$. We can calculate $a_{1}$ by applying the map $\tau$,
\begin{eqnarray*}
a_{1}=\left(\gamma^{u}+\gamma^{v}+\gamma^{u+v}\right)^{2^{m}} & = & \gamma^{u}+\gamma^{v}+\gamma^{u+v}+2\gamma^{2^{m-1}\left(u+v\right)}\left(1+\gamma^{2^{m-1}u}+\gamma^{2^{m-1}v}\right).\end{eqnarray*}

\noindent From $2\left(1+\left(\gamma^{u}\right)^{2^{m-1}}+\left(\gamma^{v}\right)^{2^{m-1}}\right) = 2\left(1+\left(\theta^{a}\right)^{2^{2m-2}}+\left(\theta^{b}\right)^{2^{2m-2}}\right) = 0$,
we have

\begin{eqnarray}
a_{1} = \left(\gamma^{u}+\gamma^{v}+\gamma^{u+v}\right)^{2^{m}} = \gamma^{u}+\gamma^{v}+\gamma^{u+v} = \gamma^{2^{m-1}a}+\gamma^{2^{m-1}b}+\gamma^{2^{m-1}\left(a+b\right)}.\label{eq:case1-4}\end{eqnarray}

\noindent By substituting \eqref{eq:case1-4} to \eqref{eq:case1-3}, we can
get $a_2 = 0$. Hence we have
\begin{eqnarray*}
a_{0} & = & 0, \\
a_{1} & = & \gamma^{2^{m-1}a}+\gamma^{2^{m-1}b}+\gamma^{2^{m-1}\left(a+b\right)},\\
a_{2} & = & 0.\end{eqnarray*}

\noindent Consequently, the coordinate $c$ such that
\begin{equation*}
1+\xi^{a}+\xi^{b}=2\xi^{c}\end{equation*}
is determined as $2\xi^{c}=2\left( \gamma^{2^{m-1}a}+\gamma^{2^{m-1}b}+\gamma^{2^{m-1}\left(a+b\right)}\right)$. For the cases when $c = 0,a$ and $b$, it contradicts the minimum Lee weight of $P_4$ is 6.

The codewords is as follows :

\begin{equation*}
\left(\begin{array}{cccccccccc}
c_\infty & c_0 & \ldots & c_a & \ldots & c_b & \ldots & c_c & \ldots \\
-1 & 1 & \ldots & 1 & \ldots & 1 & \ldots & 6 & \ldots
\end{array}\right).\end{equation*}
The Lee weight of this codeword is 6. In subsection 3.1, we showed the existence of a codeword of the extended Hamming code with Hamming weight 4. It ensures the existence of this codeword. We also prove the codewords corresponding to Case 9 and 10 in Table 1 has the Lee weight 6 and it ensures the existence of these codewords from the codeword of the extended Hamming code mentioned in subsection 3.1.

Next we assume the image of the codeword
of $P_8$ by the map $\mu$ satisfies $1+\gamma^{t}+\gamma^{2t}=0$ where $\gamma^{3t}=1$
and $t=\frac{2^{m}-1}{3}$ for Case 11 of Table 1 of Theorem 3. So, $1+\theta^t+\theta^{2t} = 0$ is also satisfied.

\noindent We put\begin{equation}
1+\xi^{t}+\xi^{2t}=a_{0}+2a_{1}+4a_{2},\mbox{ } a_0,a_1,a_2\in\mathcal{T}.\label{eq:rep11}\end{equation}

\noindent By applying the map $\tau$ in \eqref{eq:rep11}, we can write
\begin{eqnarray*}
a_{0} & = & \left(1+\xi^{t}+\xi^{2t}\right)^{2^{m}}\nonumber\\
 & = & 1+\xi^{t}+\xi^{2t}+4\left(1+\xi^{2^{m-1}t}+\xi^{2^{m-1}2t}\right)\cdot \left(\xi^{2^{m-2}t}+\xi^{2^{m-2}2t}+1\right)\\
 &  & +6\left(\xi^{2^{m-1}t}+\xi^{2^{m-1}2t}+1\right). \end{eqnarray*}

\noindent From $4\left(1+\left(\xi^{t}\right)^{2^{m-1}}+\left(\xi^{2t}\right)^{2^{m-1}}\right) = 4\left(1+\left(\theta^{t}\right)^{2^{m-1}}+\left(\theta^{2t}\right)^{2^{m-1}}\right) = 4\left(1+\theta^{t}+\theta^{2t}\right)^{2^{m-1}} = 0$, we obtain\begin{equation*}
a_{0}=1+\xi^{t}+\xi^{2t}+6\left(\xi^{2^{m-1}t}+\xi^{2^{m-1}2t}+1\right).\end{equation*}
Thus we verify $a_0 = 0$ from $4a_0 = 0$. We can calculate $a_{1}$ and $a_{2}$.
\begin{eqnarray}
2a_{1}+4a_{2}& = & \left(1+\xi^{t}+\xi^{2t}\right)-\left(1+\xi^{t}+\xi^{2t}\right)^{2^{m}},\nonumber \\
2a_{1}+4a_{2}& = & 2\left(\xi^{2^{m-1}t}+\xi^{2^{m-1}2t}+1\right),\nonumber \\
a_{1}+2a_{2}& = & \gamma^{2^{m-1}t}+\gamma^{2^{m-1}2t}+1.\label{eq:case11-2}\end{eqnarray}

\noindent From applying the Frobenius automorphism, we have $\left(1+\gamma^t+\gamma^{2t}\right)^{2^{m-1}} = 0$, that is $a_1 + 2a_2 = 0$. Thus we obtain $a_0 = a_1 = a_2 = 0$, then \begin{equation*}
1+\xi^{t}+\xi^{2t}=0.\end{equation*}

\noindent The codewords is as follows :\begin{equation*}
\left(\begin{array}{ccccccc}
c_\infty & c_0 & \ldots & c_t & \ldots & c_{2t} & \ldots\\
5 & 1 & \ldots & 1 & \ldots & 1 & \ldots \end{array}\right)\end{equation*}
and the Lee weight of this codeword is 6.
It concludes that the minimum Lee weight of $P_{8}$ is 6.
\end{proof}

We will give some examples of the codewords with the minimum Lee weight 6 to clarify our proof.

\vspace{0.5cm}
\textbf{Examples of the codewords with the minimum Lee weight of $P_8$}

\begin{enumerate}
\item We give the example of the codeword with minimum Lee weight which generated by searching the weight of the codewords over $GR(8,m)$ by computer programs.
\begin{itemize}
\item For case $m=3$, we found the codeword $\textbf{c} = \left(7,1,1,6,0,0,1,0\right)$ has the minimum Lee weight 6 by computer search.
\item For case $m=4$, we found the codeword $\textbf{c} = \left(7,1,6,0,0,0,0,0,0,0,0,0,1,1,0,0\right)$ and $\textbf{c} = \left(5,1,0,0,0,0,1,0,0,0,0,1,0,0,0,0\right)$ have the minimum Lee weight 6 by computer search. By consider Table \ref{table:case4odd} and Table \ref{table:case4even} of Theorem \ref{thm:list}, we can see that the first codeword transformed from Case 1 and the second codeword transformed from Case 11.
\end{itemize}

\item For $m=3$, it is easy to see $1 + \theta + \theta^5 = 0$ is satisfied where $\theta$ is a root of the primitive polynomial $h_2(X) = X^3 + X^2 + 1$. Next we assign the value of $c$ which satisfies $1 + \gamma + \gamma^5 = 2\gamma^c$, where $\gamma$ is a root of the primitive basic irreducible polynomial $h_4(X) = X^3 - X^2 - 2X - 1$. We obtain $c=2$. From the equation $1 + \gamma + \gamma^5 = 2\gamma^2$, it is sufficient to determine the sign of the equation $1 + \xi + \xi^5 = \pm 2\xi^2$, where $\xi$ is a root of the primitive basic irreducible polynomial $h_8(X) = X^3 - 5X^2 -6X - 1$. Thus we obtain the codeword $\textbf{c} = \left(7,1,1,6,0,0,1,0\right)$ of $P_8$ with the minimum Lee weight 6. We do the same scheme for the case $m=4$ and know the codeword $\textbf{c} = \left(7,1,1,0,0,1,6,0,0,0,0,0,0,0,0,0\right)$ has the minimum Lee weight 6.

\end{enumerate}


\bibliographystyle{plain}

\end{document}